\newtheorem{theorem}{Theorem}[section]
\newtheorem{lemma}[theorem]{Lemma}
\newtheorem{proposition}[theorem]{Proposition}
\def\ep{\varepsilon}
\def\R{\mathbb R}
\def\S{\mathbb S}
\def\N{\mathbb N}
\def\pa{\partial}
\def\b{\backslash}
\begin{document}

\title{Inverse scattering at fixed energy for the multidimensional Newton equation in short range radial potentials}
\author{Alexandre Jollivet}

\maketitle

\begin{abstract}
We consider the inverse scattering problem at fixed and sufficiently large energy for the nonrelativistic and relativistic Newton equation in $\R^n$, $n \ge 2$, with a smooth and short range electromagnetic field $(V,B)$.
Using results of [Firsov, 1953] or [Keller-Kay-Shmoys, 1956] we obtain a uniqueness result when $B$ is assumed to be zero  in a neighborhood of infinity and $V$ is assumed to be spherically symmetric in a neighborhood of infinity. 
\end{abstract}

\section{Introduction}
Consider the following second order differential equation that is the multidimensional nonrelativistic Newton equation with electromagnetic field
\begin{eqnarray}
\ddot x(t)=F(x(t),\dot x(t)):=-\nabla V(x(t))+B(x(t))\dot x(t),\label{1.1}
\end{eqnarray}
where $x(t)\in \R^n,$ $\dot x(t)={{\rm d}x\over {\rm d}t}(t)$. In this equation we assume that $V\in C^2(\R^n,\R)$ and for any $x\in \R^n,$ $B(x)$ is a $n\times n$
antisymmetric matrix with elements $B_{i,k}(x),$  $B_{i,k}\in C^1(\R^n,\R)$, which satisfy
\begin{equation}
{\pa B_{i,k}\over \pa x_l}(x)+{\pa B_{l,i}\over \pa x_k}(x)+{\pa B_{k,l}\over \pa x_i}(x)=0,\label{1.4B}
\end{equation}
for $x=(x_1,\ldots,x_n)\in \R^n$ and for $l,i,k=1\ldots n$.

For $n=3$, the equation \eqref{1.1} is the equation of motion in $\R^n$ of a nonrelativistic particle of mass $m=1$ and charge $e=1$ in an
external and static electromagnetic field described by $(V,B)$ (see, for example, \cite[Section 17]{[LL2]}).
For the electromagnetic field the function $V$ is an electric potential and $B$ is the
magnetic field. Then $x$ denotes the position of the particle,
$\dot x$ denotes its velocity, $\ddot x$ denotes its acceleration and $t$ denotes the time.

For the equation \eqref{1.1} the energy
\begin{equation}
E={1\over 2}|\dot x(t)|^2+V(x(t))\label{1.2}
\end{equation}
is an integral of motion.

We assume that the electromagnetic coefficients $V$ and $B$ are short range.  More precisely we assume that $(V,B)$ satisfies the following conditions
\begin{eqnarray}
|\pa_x^{j_1}V(x)|&\le& \beta_{|j_1|}(1+|x|)^{-\alpha-|j_1|},\ x\in \R^n,\label{1.3}\\
|\pa_x^{j_2}B_{i,k}(x)|&\le& \beta_{|j_2|+1}(1+|x|)^{-\alpha-1-|j_2|},\ x\in \R^n,\label{1.4}
\end{eqnarray}
for $|j_1|\le 2,|j_2|\le 1,$ $i,k=1\ldots n$ and some $\alpha>1$ (here $j_l$ is the multiindex $j_l=(j_{l,1},\ldots,j_{l,n})
\in (\N\cup \{0\})^n,$ $|j_l|=\sum_{k=1}^nj_{l,k}$ and $\beta_{|j_l|}$ are positive real
constants).
We denote by $\|.\|$ the norm on the short range electromagnetic fields defined by
\begin{eqnarray}
\|(V,B)\|&=&\sup_{x\in \R^n, \ j_1\in \N^n\atop |j_1|\le 2}\Big((1+|x|)^{\alpha+|j_1|}|\pa_x^{j_1}V(x)|\Big)\label{1.4c} \\
&&+\sup_{x\in \R^n,\ j_2\in \N^n\atop |j_2|\le 1,\ i,k=1\ldots n}\Big((1+|x|)^{\alpha+1+|j_2|}|\pa_x^{j_2}B_{i,k}(x)|\Big).\nonumber
\end{eqnarray}

Under conditions \eqref{1.3}--\eqref{1.4}, we have the following properties (see, for example,  \cite{[S]} and
\cite{[LT]} where classical scattering of particles in a short-range electric field and in a long-range magnetic field are studied respectively): for any
$(v_-,x_-)\in \R^n\times\R^n,\ v_-\neq 0,$
the equation \eqref{1.1}  has a unique solution $x\in C^2(\R,\R^n)$ such that
\begin{equation}
{x(t)=tv_-+x_-+y_-(t),}\label{1.6}
\end{equation}
where $|\dot y_-(t)|+|y_-(t)|\to 0,\ {\rm as}\ t\to -\infty;$  in addition for almost any
$(v_-,x_-)\in \R^n\times \R^n,\ v_-\neq 0,$ the unique solution $x(t)$ of equation \eqref{1.1} that satisfies \eqref{1.6} also satisfies the following asymptotics
\begin{equation}
{x(t)=tv_++x_++y_+(t),}\label{1.7}
\end{equation}
 where $v_+\neq 0,\ |\dot y_+(t)|+|y_+(t)|\to 0,{\rm\ as\ }t \to +\infty$.
At fixed energy $E>0$, we denote by $\S^1_E$ the set $\{v_-\in \R^n\ |\ |v_-|^2=2E\}$ and  we denote by ${\cal D}(S_E)$ the set of $(v_-,x_-)\in \S^1_E\times \R^n$ for which the unique solution $x(t)$ of equation \eqref{1.1} that satisfies \eqref{1.6} also satisfies \eqref{1.7}.
We have that ${\cal D}(S_E)$  is an open set of $\S^1_E \times \R^n$ and ${\rm Mes}((\S^1_E \times \R^n) \b {\cal D}(S_E))=0$ for the Lebesgue
measure on $\S^1_E \times \R^n$.
The map
$S_E: {\cal D}(S_E) \to \S^1_E\times\R^n $
given by the formula
\begin{equation}
S_E(v_-,x_-)=(v_+,x_+),\label{1.8}
\end{equation}
is called the scattering map at fixed energy $E>0$ for the equation \eqref{1.1}. Note that
if $V(x)\equiv 0$ and $B(x)\equiv 0$, then $v_+=v_-,\ x_+=x_-,\ (v_-,x_-)\in \R^n \times \R^n,\ v_-\neq 0$.

In this paper we consider the following inverse scattering problem  at fixed energy
\begin{equation}
\textrm{Given }S_E \textrm{ at fixed energy }E>0,\ \textrm{find }(V,B).\label{1.9}
\end{equation}

Note that using the conservation of energy we obtain that  if $E < \sup_{\R^n}V$ then $S_E$ does not
determine uniquely $V$.

We mention results on Problem \eqref{1.9}.
When $B\equiv0$ and $V$ is assumed to be spherically symmetric and monotonuous decreasing in $|x|$ ($V$ is not assumed to be short range), uniqueness results for Problem \eqref{1.9} were obtained  in \cite{[F],[KKS]}.
The scattering map  $S_E$ also uniquely determines $(V,B)$ at fixed and sufficiently large energy when $(V,B)$ is assumed to be compactly supported inside a fixed domain of $\R^n$ (see \cite{[No]} for $B\equiv0$ and see \cite{[Jo3]}). This latter result relies on a uniqueness result for an inverse boundary kinematic problem for equation \eqref{1.1} (see \cite{[GN],[No]} when $B\equiv0$, and see \cite{[DPSU], [Jo3]}) and connection between this boundary value problem and the inverse scattering problem on $\R^n$ (see \cite{[No]} for $B\equiv0$, and see \cite{[Jo3]}).

To our knowledge it is still unknown whether the scattering map at fixed and sufficiently large energy uniquely determine the electromagnetic field under the regularity and short range conditions \eqref{1.3} and \eqref{1.4} (see \cite[Conjecture B]{[No]} for $B\equiv 0$).

In this paper we propose a generalization of results in \cite{[F],[KKS]} for the short range case where no decreasing monotonicity is assumed. More precisely we have the following uniqueness result.

\begin{theorem}
\label{thm}
Let $(\lambda,R)\in (0,+\infty)^2$ and let $(V,B)$ be an electromagnetic field that satisfies the assumptions \eqref{1.4B}, \eqref{1.3} and \eqref{1.4} and $\|(V,B)\|\le \lambda$.
Assume that $B\equiv 0$ outside $B(0,R)$ and that $V$ is spherical symmetric outside $B(0,R)$.  Then there exists a positive constant $E(\lambda,R)$ (which does not depend on $(V,B)$) so that the scattering map at fixed energy $E>E(\lambda, R)$ uniquely determines $(V,B)$ on $\R^n$.
\end{theorem}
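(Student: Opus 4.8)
The plan is to split $\R^n$ into the exterior region $\{|x|\ge R\}$, where the field is radial with $B\equiv0$, and the ball $B(0,R)$, where $(V,B)$ is unknown and general. I would first reconstruct the radial profile of $V$ on the exterior from the scattering of trajectories that never enter the ball, then use the now-known exterior dynamics to reduce $S_E$ to a boundary scattering relation on $\partial B(0,R)$, and finally invoke the known boundary-value uniqueness results to recover $(V,B)$ inside the ball.

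For the exterior reconstruction I would show that for $E$ larger than a first threshold $E_1(\lambda,R)$ the effective radial problem in $\{|x|\ge R\}$ is regular: the function $\rho(r)=r^2\big(1-V(r)/E\big)$, whose level sets $\rho(r)=b^2$ give the turning points of a trajectory of impact parameter $b$, is strictly increasing for $r\ge R$ because $\rho'(r)=2r-E^{-1}\big(2rV(r)+r^2V'(r)\big)$ is dominated by $2r\ge 2R$ once $E$ is large relative to $\lambda$ through \eqref{1.3}. Hence every trajectory whose impact parameter keeps its perihelion above $R$ has a single turning point, stays in the exterior, and moves in a plane with deflection governed solely by the radial profile $V(r)$, $r\ge R$. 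From the asymptotic data $(v_\pm,x_\pm)$ I would read off, for each such trajectory, the speed $\sqrt{2E}$, the impact parameter $b$ (equivalently the angular momentum $b\sqrt{2E}$), and the total scattering angle $\theta_E(b)$. The high-energy monotonicity of $\rho$ plays exactly the role that the decreasing-monotonicity hypothesis plays in \cite{[F],[KKS]}, so the associated Abel-type (Firsov) inversion applies and recovers $V(r)$ for all $r\ge R$. The reconstruction is self-consistent because the inversion integral at radius $r$ uses only $\theta_E(b)$ for impact parameters above the value whose perihelion equals $r$, i.e.\ only exterior data.

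Having determined $V$ on $\{|x|\ge R\}$, the exterior equation of motion is completely known. For any trajectory entering $B(0,R)$, its incoming leg lies entirely in the exterior, so integrating the known exterior equation from $(v_-,x_-)$ up to the first hitting of $\partial B(0,R)$ yields the entry position and velocity; symmetrically, integrating backward from $(v_+,x_+)=S_E(v_-,x_-)$ gives the exit position and velocity. Thus $S_E$ together with the reconstructed exterior field determines the boundary scattering relation on $\partial B(0,R)$ for the interior dynamics. I would then invoke the uniqueness result for the inverse boundary value problem associated with \eqref{1.1} on the bounded domain $B(0,R)$ at fixed large energy, as established in \cite{[No]} for $B\equiv0$ and in \cite{[Jo3]} in general (through the connection with the inverse kinematic problem of \cite{[GN],[No],[DPSU]}); this recovers $(V,B)$ inside $B(0,R)$. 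Taking $E(\lambda,R)$ to be the maximum of $E_1(\lambda,R)$ and the threshold required by the interior result yields the theorem.

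The main obstacle I anticipate is the rigorous passage from the multidimensional map $S_E$ to the clean one-dimensional radial scattering angle, together with the verification that the Firsov/Keller--Kay--Shmoys inversion survives in the short-range, non-monotone regime. One must check that large energy alone forces a single turning point, that $\theta_E(b)$ is genuinely the deflection of a planar orbit, and that the inversion integral converges and reproduces $V$ under the mere decay \eqref{1.3} rather than monotonicity. Ensuring continuity and correct matching at $r=R$, and verifying that both thresholds can be chosen uniformly over all $(V,B)$ with $\|(V,B)\|\le\lambda$, are the remaining points requiring care.
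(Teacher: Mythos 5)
Your proposal follows the paper's proof essentially verbatim: recover the radial potential in a neighborhood of infinity by the Firsov/Keller--Kay--Shmoys Abel inversion (your large-$E$ monotonicity of $r\mapsto r^2(E-V(r))$ is exactly the content of Lemma \ref{lemrmin}), then reduce $S_E$ to boundary data for the inverse kinematic problem in a ball and invoke Proposition \ref{prop}. The one discrepancy is your claim to recover $V(r)$ for \emph{all} $r\ge R$: the inversion only reaches perihelia down to some $\beta'>R$, since impact parameters below the threshold $\beta$ are not guaranteed to keep the orbit exterior and planar, so the interior step must be applied on the larger ball $B(0,\beta')$ rather than $B(0,R)$ --- which is precisely how the paper deploys Proposition \ref{prop}.
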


The proof of Theorem \ref{thm} is obtained by recovering first the electric potential in a neighborhood of infinity using Firsov or Keller-Kay-Shmoys' result \cite{[F], [KKS]}  and then by recovering the electromagnetic field on $\R^n$ using the following proposition  which generalizes \cite[Theorem 7.2]{[Jo3]}.

\begin{proposition}
\label{prop}
Let $(\lambda,R)\in (0,+\infty)^2$ and let $(V,B)$ be an electromagnetic field that satisfies the assumptions \eqref{1.4B}, \eqref{1.3} and \eqref{1.4} and $\|(V,B)\|\le \lambda$.
Assume that $(V,B)$ is known outside $B(0,R)$. Then there exists a positive constant $E(\lambda,R)$ so that the scattering map at fixed energy $E>E(\lambda,R)$ uniquely determines $(V,B)$ on $\R^n$.
\end{proposition}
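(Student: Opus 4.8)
The plan is to reduce Problem \eqref{1.9} to an inverse boundary kinematic problem for \eqref{1.1} on the ball $B(0,R)$, and then to invoke a short-range adaptation of the interior uniqueness result \cite[Theorem 7.2]{[Jo3]} (see also \cite{[DPSU]}). The key point is that, since $(V,B)$ is known outside $B(0,R)$, equation \eqref{1.1} can be integrated in the exterior region without any knowledge of the interior field, so the asymptotic scattering data carried by $S_E$ can be transported to the sphere $\pa B(0,R)$.

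First I would fix $E>E(\lambda,R)$ large enough that the high-energy estimates used to construct $S_E$ hold uniformly over $\|(V,B)\|\le\lambda$; at such energies every trajectory is a small perturbation of a straight line and, in particular, each trajectory whose free continuation meets $B(0,R)$ crosses $\pa B(0,R)$ transversally exactly twice. Given incoming data $(v_-,x_-)$ in the full-measure domain of $S_E$, I would integrate \eqref{1.1} forward from $t=-\infty$ using only the known exterior field up to the first crossing of $\pa B(0,R)$; since the true trajectory coincides with this solution before entering the ball, this produces the entry boundary condition $(q_-,\dot q_-)$ as a function of $(v_-,x_-)$ alone. Symmetrically, from the outgoing data $(v_+,x_+)=S_E(v_-,x_-)$ I would integrate \eqref{1.1} backward from $t=+\infty$ in the known exterior field to obtain the exit boundary condition $(q_+,\dot q_+)$. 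Composing these two constructions expresses the interior boundary map $(q_-,\dot q_-)\mapsto(q_+,\dot q_+)$ entirely in terms of $S_E$ and the known exterior field. Using the straight-line approximation once more, as $(v_-,x_-)$ varies the entry data $(q_-,\dot q_-)$ sweep out (almost) all inward-pointing boundary conditions on $\pa B(0,R)$ of the correct speed, the one-parameter time-translation redundancy in $(v_-,x_-)$ accounting for the dimension count; hence the full interior boundary (lens) data at energy $E$ are recovered.

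Finally, with the interior boundary map in hand, I would apply the uniqueness theorem for the inverse boundary kinematic problem associated to \eqref{1.1} to conclude that $(V,B)$ is determined on $B(0,R)$; together with the hypothesis that $(V,B)$ is known outside $B(0,R)$, this gives $(V,B)$ on all of $\R^n$. The main obstacle, and the sense in which the statement generalizes \cite[Theorem 7.2]{[Jo3]}, is that the fields here are only short range rather than compactly supported, so the interior uniqueness argument must be carried out in the presence of a nonzero but known exterior field. Controlling the exterior transport is where the decay conditions \eqref{1.3}--\eqref{1.4} enter: they must be used to guarantee convergence of the transport integrals from $\pm\infty$, the transversality and two-crossing properties above, and enough regularity of the boundary map for the interior argument to apply. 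Establishing these uniform high-energy bounds and verifying that the reduction produces boundary data of exactly the form required by the interior uniqueness result is the technical heart of the proof.
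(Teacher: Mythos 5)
Your proposal is correct and takes essentially the same route as the paper: transport the scattering data through the known exterior field to the sphere $\partial B(0,R)$ to recover the boundary data $k_0(E,q_0,q)$, $k(E,q_0,q)$, then apply the boundary-data uniqueness result of \cite{[Jo3]} (Lemma \ref{inv}); the transversality/two-crossing property you invoke is exactly the content of the nontrapping Lemma \ref{l1.4.1}. The only cosmetic difference is that the paper obtains the boundary data for \emph{every} pair $(q_0,q)$ by extending each interior trajectory $x(t,E,q_0,q)$ to a scattering trajectory via Lemma \ref{l1.4.1}, rather than by your ``sweeping out almost all inward-pointing boundary conditions'' argument.
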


Concerning the inverse scattering problem for the classical multidimensional nonrelativistic Newton equation at high energies and the  inverse scattering problem for a particle in electromagnetic field (with $B\not\equiv 0$ or $B\equiv0$) in quantum
mechanics, we refer the reader to
\cite{[GN],[No],[Jo3],[Jo4]} and references therein.

Concerning the inverse problem for \eqref{1.1} in the one-dimensional case, we can mention the works \cite{[Ab],[K],[AFC]}.

The structure of the paper is as follows. In section \ref{sec_prop} we prove Proposition \ref{prop}. In section \ref{sec_thm} we prove Theorem \ref{thm}.
In section \ref{sec_rel} we provide  similar results for the relativistic multidimensional Newton equation with electromagnetic field.

\section{Proof of Proposition \ref{prop}}
\label{sec_prop}
\subsection{Nontrapped solutions of equation \eqref{1.1}}
We will use the  standard Lemma \ref{l1.4.1} on nontrapped solutions of equation \eqref{1.1}. For sake of consistency its proof is given in Appendix.

\begin{lemma}
\label{l1.4.1}
Let  $E>0$ and let $R_E$ and $C_E$ be defined by
\begin{eqnarray}
&&C_E:={2E\over (n\beta_1+2\beta_0)(1+\sqrt{2(E+\beta_0)})},\label{112}\\
&&\sup_{|x|\ge R_E}(1+|x|)^{-\alpha}\le {C_E\over 2}.\label{111}
\end{eqnarray}
If $x(t)$ is a solution of equation (1.1) of energy $E$ such that $|x(0)|<R_E$ and if there exists a time $T>0$ such that $x(T)=R_E$ then
\begin{equation}
|x(t)|^2\ge R_E^2+ E |t-T|^2\textrm{ for }t\in (T,+\infty),\label{114}
\end{equation}
and there exists a unique $(x_+,v_+)\in \R^n\times\S^1_E$ so that
$$
x(t)=x_++tv_++y_+(t),\ t\in \R,
$$
where $|y_+(t)|+|\dot y_+(t)|\to 0$ as $t\to +\infty.$
\end{lemma}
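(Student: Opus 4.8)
The plan is to control the growth of the radial quantity $\phi(t):=|x(t)|^2$ by a virial-type (convexity) estimate and then to read off the asymptotics from the resulting linear-in-time escape. First I would differentiate twice: since $\phi'(t)=2\langle x(t),\dot x(t)\rangle$, the equation of motion \eqref{1.1} together with the energy relation \eqref{1.2} (i.e. $|\dot x|^2=2(E-V(x))$) give
$$\phi''(t)=2|\dot x(t)|^2+2\langle x(t),\ddot x(t)\rangle=4\big(E-V(x(t))\big)-2\langle x(t),\nabla V(x(t))\rangle+2\langle x(t),B(x(t))\dot x(t)\rangle.$$
Energy conservation also bounds the speed, $|\dot x(t)|\le\sqrt{2(E+\beta_0)}$, because $V\ge-\beta_0(1+|x|)^{-\alpha}\ge-\beta_0$.

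Next I would estimate the three perturbative terms using the short-range bounds \eqref{1.3}--\eqref{1.4}. The pointwise inequality $|x|(1+|x|)^{-\alpha-1}\le(1+|x|)^{-\alpha}$, the matrix bound $|B(x)\dot x|\le n\beta_1(1+|x|)^{-\alpha-1}|\dot x|$, and the speed bound yield
$$\big|4V(x)+2\langle x,\nabla V(x)\rangle-2\langle x,B(x)\dot x\rangle\big|\le 2(n\beta_1+2\beta_0)\big(1+\sqrt{2(E+\beta_0)}\big)(1+|x|)^{-\alpha}.$$
By the definitions \eqref{112} and \eqref{111} of $C_E$ and $R_E$, the right-hand side is at most $2E$ as soon as $|x|\ge R_E$, whence $\phi''(t)\ge 2E$ on every time interval where $|x(t)|\ge R_E$. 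This convexity estimate is the heart of the matter, and arranging the constants to close against the calibrated value of $C_E$ is the step I expect to be most delicate.

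With convexity in hand I would take $T$ to be the first time the orbit meets the sphere $|x|=R_E$ (so $|x(t)|<R_E$ for $t<T$, which forces $\phi'(T)\ge0$) and run a bootstrap: if $|x|$ ever dropped back below $R_E$ after $T$, then at the first such return time $t_1$ one would have $\phi(t_1)=R_E^2=\phi(T)$, yet $\phi''\ge2E>0$ on $[T,t_1]$ makes $\phi'$ strictly increasing and hence $\phi$ strictly increasing on $(T,t_1]$ --- a contradiction. Thus $|x(t)|\ge R_E$ for all $t\ge T$, and integrating $\phi''\ge2E$ twice from $T$ with $\phi'(T)\ge0$ gives $\phi(t)\ge R_E^2+E(t-T)^2$, which is \eqref{114}.

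Finally, \eqref{114} gives $|x(t)|\ge\sqrt{E}\,|t-T|$, so $(1+|x(t)|)^{-\alpha-1}\le C(1+|t-T|)^{-\alpha-1}$. Bounding $|\ddot x(t)|\le\big(1+n\sqrt{2(E+\beta_0)}\big)\beta_1(1+|x(t)|)^{-\alpha-1}$ then shows $\ddot x\in L^1(T,+\infty)$, so $v_+:=\lim_{t\to+\infty}\dot x(t)$ exists; the energy relation together with $V(x(t))\to0$ gives $|v_+|^2=2E$, i.e. $v_+\in\S^1_E$. Since $|\dot x(t)-v_+|\le\int_t^{\infty}|\ddot x(s)|\,ds$ and, using $\alpha>1$, the function $s\mapsto(s-T)|\ddot x(s)|\lesssim(1+|s-T|)^{-\alpha}$ is also integrable, the limit $x_+:=\lim_{t\to+\infty}(x(t)-tv_+)$ exists as well. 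Setting $y_+(t):=x(t)-tv_+-x_+$ then yields $x(t)=x_++tv_++y_+(t)$ with $|y_+(t)|+|\dot y_+(t)|\to0$, and uniqueness of $(x_+,v_+)$ is immediate from the uniqueness of these limits.
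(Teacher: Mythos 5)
Your proof is correct and follows essentially the same route as the paper's: a convexity (virial) estimate showing that the second derivative of $|x(t)|^2$ is at least $2E$ once $|x|\ge R_E$ (with the constants calibrated exactly as you do against $C_E$), a sign condition on its first derivative at the first crossing of the sphere, two integrations to obtain \eqref{114}, and then integrability of the force along the escaping orbit to produce $(x_+,v_+)$. The only cosmetic difference is that you work with $|x|^2$ where the paper uses $I=\tfrac12|x|^2$.
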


Note that $C_E\to +\infty$ as $E\to +\infty$ while $\sup_{|x|\ge R}(1+|x|)^{-\alpha}$ is a decreasing function of $R$ that goes to $0$ as $R\to +\infty$.
Note that Lemma \ref{l1.4.1} is stated for positive times $t$ but a similar result hold for negative times $t$.

\subsection{The inverse kinematic problem for equation \eqref{1.1}}
\label{invkin}
We first formulate the inverse kinematic problem for equation \eqref{1.1} inside a ball of center $0$ and radius $R>0$ denoted by $B(0,R)$. 
For $(m,l)\in (\N\b\{0\})^2$ and for a function $f$ from $B(0,R)$ to $\R^m$ of class $C^l$ we define the $C^l$ norm of $f$ by
$$
\|f\|_{C^l,R}=\sup_{x\in B(0,R),\ \alpha\in \N^n\atop |\alpha|\le l}|\pa_x^{\alpha}f(x)|.
$$
We denote by $\pa B(0,R)$ the boundary of the ball $B(0,R)$.

Then we recall that there exists a constant $E(R,\|V\|_{C^2,R}, \|B\|_{C^1,R})$
so that at fixed energy $E>E(R,\|V\|_{C^2,R}, \|B\|_{C^1,R})$ the solutions $x$ of equation \eqref{1.1} in $B(0,R)$ at energy
$E$ have the following properties (see for example \cite{[Jo3]}):
\begin{equation}
\label{1.4a}\begin{array}{l}
\textrm{for each solution }x(t) \textrm{ there are }t_1,t_2\in \R, t_1<t_2, \textrm{ such that }\\
x\in C^3([t_1,t_2],\R^n),\ (x(t_1), x(t_2))\in \pa B(0,R)^2,\ x(t)\in B(0,R)
\textrm{ for }t\in ]t_1,t_2[,\\
x(s_1)\not=x(s_2) {\rm \ for\ } s_1,s_2\in [t_1,t_2], s_1\not= s_2;
\end{array}
\end{equation}
and
\begin{equation}
\label{1.5a}\begin{array}{l}
\textrm{for any two distinct points }q_0,q\in \pa B(0,R), \textrm{ there is one and only one solution}\\
x(t)=x(t,E,q_0,q)\textrm{ such that }x(0)=q_0, x(s)=q \textrm{ for some }s>0.
\end{array}
\end{equation}
This is closely related to the property that at fixed and sufficiently large energy $E$, the compact set $\overline{B(0,R)}$ endowed with the riemannian metric $\sqrt{E-V(x)}|dx|$ and the magnetic field defined by $B$ is simple (see \cite{[DPSU]}).

For $(q_0,q)$ two distinct points of $\pa B(0,R)$ we denote by $s(E,q_0,q)$ the time at which $x(t,E,q_0,q)$ reaches $q$ from $q_0$ and we denote by
 $k_0(E,q_0,q)$ the velocity vector $\dot x(0,E,q_0,q)$
and by  $k(E,q_0,q)$ the velocity vector $\dot x(s(E,q_0,q),$ $E,q_0,q).$
The inverse kinematic problem is then
\begin{eqnarray*}
\textrm{ Given }k(E,q_0,q),\ k_0(E,q_0,q) \textrm{ for all } q_0,q\in\pa B(0,R),&&\\
q_0\not=q,\textrm{ at fixed
 sufficiently large energy }E,
\textrm{ find } (V,B) \textrm{ in }B(0,R).&&
\end{eqnarray*}

The data $k_0(E,q_0,q),$ $k(E,q_0,q),$ $q_0,q\in\pa B(0,R), q_0\not=q,$ are the boundary value data of the inverse kinematic problem, and we recall the following result.

\begin{lemma}[see, for example, Theorem 7.1 in \cite{[Jo3]}]
\label{inv}
At fixed $E>E(\|V\|_{C^2,R},$ $\|B\|_{C^1,R},R)$, the boundary data $k_0(E,q_0,q)$,
$(q_0,q)\in \pa B(0,R)\times \pa B(0,R),$ $q_0\not=q$,
uniquely determine $(V,B)$ in $B(0,R)$.
\end{lemma}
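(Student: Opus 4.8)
The plan is to reconstruct $(V,B)$ inside $B(0,R)$ from the entry-velocity data $k_0$ by first reading off $V$ on the boundary, then extracting from the \emph{direction} of $k_0$ a (weighted) X-ray transform of the force along chords of the ball, and finally upgrading this leading-order identification to an exact reconstruction at a single fixed large energy. First I would recover $V$ on $\pa B(0,R)$ from the modulus of the data: by the energy relation \eqref{1.2} applied at $t=0$ along $x(\cdot,E,q_0,q)$ one has ${1\over2}|k_0(E,q_0,q)|^2+V(q_0)=E$, so that $V(q_0)=E-{1\over2}|k_0(E,q_0,q)|^2$ for every $q\ne q_0$. Thus it remains to exploit the direction of $k_0$, i.e. the entry direction one must choose at $q_0$ in order to reach $q$.

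Next I would relate this direction to the interior field. Integrating \eqref{1.1} twice along $x(\cdot,E,q_0,q)$ gives $q-q_0=s\,k_0(E,q_0,q)+\int_0^s(s-\tau)F(x(\tau),\dot x(\tau))\,d\tau$, where $s=s(E,q_0,q)$ and $F$ is the force in \eqref{1.1}. Projecting onto the orthogonal complement of the chord direction $\omega=(q-q_0)/|q-q_0|$ makes the left side vanish, so the transverse part of $k_0$ equals $-s^{-1}\int_0^s(s-\tau)F_\perp\,d\tau$, a weighted integral of the transverse force along the trajectory. At fixed large $E$ the solution stays uniformly close to the straight segment from $q_0$ to $q$, with relative deviation of order $\lambda/E$ (an estimate of the same type as Lemma \ref{l1.4.1}, now inside $B(0,R)$ and using \eqref{1.4a}--\eqref{1.5a}); replacing $x$ by this segment and $\dot x$ by $\sqrt{2(E-V)}\,\omega$ turns the transverse part of $k_0$, to leading order, into a weighted X-ray transform along the chord of $F_\perp\approx-\nabla_\perp V+\sqrt{2(E-V)}\,B\omega$ (the magnetic term is automatically orthogonal to $\omega$ since $B$ is antisymmetric).

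To disentangle the two unknowns I would combine the data for the chord traversed in both directions. Because $B\omega$ is odd under $\omega\mapsto-\omega$ while $\nabla_\perp V$ is even, the antisymmetric combination of the transverse parts of $k_0(E,q_0,q)$ and $k_0(E,q,q_0)$ isolates, at leading order in $E$, the unweighted X-ray transform $\int B\omega\,du$ of the magnetic field along the chord, the electric contribution surviving only at the lower order $O(E^{-1/2})$. As $(q_0,q)$ ranges over all pairs of distinct boundary points the chords exhaust all lines meeting $B(0,R)$, so classical injectivity of the X-ray transform recovers $B\omega$ for every $\omega$, hence $B$; subtracting the now-known magnetic contribution from the data leaves the X-ray transform of $\nabla_\perp V$, which together with the boundary values of $V$ recovers $V$.

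The step I expect to be the main obstacle is closing the argument exactly at a single fixed $E$, without invoking the limit $E\to+\infty$. I would argue by contradiction: if two admissible fields $(V_1,B_1)$ and $(V_2,B_2)$ produce the same data $k_0$, I would write the exact (nonlinear) integral relation above for each, subtract, and show that the X-ray/Radon part controls the difference while the neglected nonlinear remainder is a contraction in a suitable norm once $E>E(\|V\|_{C^2,R},\|B\|_{C^1,R},R)$; together with the injectivity of the previous step this forces $(V_1,B_1)=(V_2,B_2)$ in $B(0,R)$. The technical heart is the uniform, quantitative control of the trajectory-to-chord deviation and of the dependence of the trajectory family on $(q_0,q)$ provided by \eqref{1.4a}--\eqref{1.5a} and by estimates of Lemma \ref{l1.4.1} type; this is exactly what fixes the energy threshold $E(\|V\|_{C^2,R},\|B\|_{C^1,R},R)$ appearing in the statement.
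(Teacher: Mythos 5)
There is a genuine gap, and it sits exactly where you flagged it: the passage from the high-energy linearization to a statement at one \emph{fixed} energy is not an implementation detail but the whole content of the lemma, and your sketch does not supply it. Your first three steps (recovering $V$ on $\pa B(0,R)$ from $|k_0|$ via \eqref{1.2}, the exact identity $k_{0,\perp}=-s^{-1}\int_0^s(s-\tau)F_\perp\,d\tau$ obtained by integrating \eqref{1.1} twice, and the even/odd splitting in $\omega$) are correct as asymptotics and essentially reproduce the high-energy reconstruction scheme of \cite{[GN]}, \cite{[No]}, \cite{[Jo4]}. But at fixed $E$ the discarded terms are of definite size, and the proposed repair --- subtract the exact relations for two fields with the same data and run a contraction --- faces concrete obstructions you do not address: (i) the time $s(E,q_0,q)$ is not part of the data and differs between the two hypothetical fields, so the subtracted identity carries an extra unknown; (ii) the reversed chord is traversed by a genuinely different trajectory (magnetic systems are not time-reversible), so the exact antisymmetrization does not cleanly remove the electric part --- it leaves the nontrivial odd-weight term $\int_0^s(s-2\tau)\nabla_\perp V\,d\tau$ plus curve-difference remainders; (iii) most seriously, inverting the X-ray transform loses smoothness relative to the sup-norm control you get on trajectory deviations, so the map you want to contract is not obviously a contraction on any single norm --- one needs either a derivative-gaining smallness estimate or a different structure altogether. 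Asserting that the remainder ``is a contraction in a suitable norm'' is naming the theorem, not proving it.

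For comparison, the paper does not prove the lemma by any such perturbative scheme: it is quoted from \cite[Theorem 7.1]{[Jo3]}, whose proof goes through a change of viewpoint. By the Maupertuis principle, the energy-$E$ trajectories of \eqref{1.1} in $\overline{B(0,R)}$ are reparametrized geodesics of the magnetic system given by the conformally Euclidean metric $\sqrt{2(E-V(x))}\,|dx|$ and the magnetic field $B$; for $E>E(\|V\|_{C^2,R},\|B\|_{C^1,R},R)$ this system is \emph{simple} (this is precisely the remark following \eqref{1.5a}), the data $k_0(E,q_0,q)$ determine its scattering relation and boundary action, and uniqueness of $(V,B)$ then follows from the boundary rigidity results for simple magnetic systems of \cite{[DPSU]} --- within the conformally Euclidean class there is no residual gauge freedom. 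That route buys exactly what your sketch lacks: a fixed-energy uniqueness statement whose threshold depends only on $R$ and the field norms, obtained from a global rigidity theorem rather than from a smallness/iteration argument. If you want to salvage your approach you would, in effect, have to reprove a quantitative version of that rigidity result, which is a substantial undertaking in its own right.
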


\subsection{Relation between boundary data of the inverse kinematic problem and the scattering map $S_E$}
\label{connect}
We will prove that at fixed and sufficiently large energy $E$ the scattering map $S_E$ determines the boundary data $k_0(E,q_0,q),$ $k(E,q_0,q),$ $q_0,q\in\pa B(0,R), q_0\not=q$. This will prove that $S_E$ uniquely determines $(V,B)$ in $B(0,R)$, which will prove Proposition \ref{prop}.

Let $R>0$ and $\lambda>0$ be such that $(V,B)$ are known outside $B(0,R)$ and $\|(V,B)\|<\lambda$. 
Note that $\max(\|V\|_{C^2,R}, \|B\|_{C^1,R})\le \|(V,B)\|<\lambda$. Thus there exists a constant $E_0(\lambda,R)$ such that at fixed energy $E>E_0(\lambda,R)$ solutions $x(t)$ of equation \eqref{1.1} in $B(0,R)$ at energy $E$ have properties \eqref{1.4a} and  \eqref{1.5a} and such that at fixed $E>E_0(\lambda, R)$ the boundary data $k_0(E,q_0,q)$,
$(q_0,q)\in \pa B(0,R)\times \pa B(0,R),$ $q_0\not=q$,
uniquely determine $(V,B)$ in $B(0,R)$.
Then using that the constant $C_E\to +\infty$ as $E\to +\infty$ in Lemma \ref{l1.4.1} we obtain that there exists $E_1(\lambda,R)$ such that for $E>E_1(\lambda,R)$ we have $\sup_{|x|\ge R}(1+|x|)^{-\alpha}\le {C_E\over 2}$ so that $R_E$ can be replaced by $R$ in Lemma \ref{l1.4.1}.

Set $E(\lambda, R)=\max(E_0(\lambda,R),E_1(\lambda,R))$ and fix $E>E(\lambda, R)$. 
Let $(x_-,v_-)\in {\cal D}(S_E)$ and $(v_+,x_+)=S_E(v_-,x_-)$. We denote by $x(.,v_-,x_-)$ the solution of equation \eqref{1.1} that satisfies \eqref{1.6} (and \eqref{1.7}).
Set
\begin{eqnarray*}
t_-(x_-,v_-)&=&\sup\{t\in \R\ |\ |x(s,x_-,v_-)|\ge R,\ s\in (-\infty,t)\},\\
t_+(x_-,v_-)&=&\inf\{t\in \R\ |\ |x(s,x_-,v_-)|\ge R,\ s\in  (t,+\infty)\}.
\end{eqnarray*}
Since $(V,B)$ is known outside $B(0,R)$ we can solve equation \eqref{1.1} with initial conditions \eqref{1.6} and \eqref{1.7} and we obtain that $x(.,x_-,v_-)$ is known on $(-\infty,t_-(x_-,v_-)]\cup [t_+(x_-,v_-),\infty)$.

If $x(s,x_-,v_-)\not\in B(0,R)$ for any $s\in \R$, then $t_\pm(x_-,v_-)=\mp\infty$.
If there exists $s\in \R$ such that $x(s,x_-,v_-)\in B(0,R)$ then set 
\begin{equation}
q_0=x(t_-(x_-,v_-)),\ q=x(t_+(x_-,v_-)).\label{300}
\end{equation}
Using Lemma \ref{l1.4.1} and $E>E_1(\lambda, R)$ we obtain
that $|x(s,x_-,v_-)|<R$ for $s\in(t_-(x_-,v_-),t_+(x_-,v_-))$ and $q_0\not=q$. (Note that if $x(t)$ satisfies equation \eqref{1.1} then $x(t+t_0)$ also satisfies \eqref{1.1} for any $t_0\in \R$.)
Therefore we have $x(s,x_-,v_-)=x(s-t_-(x_-,v_-),E,q_0,q)$ for $s\in (t_-(x_-,v_-),t_+(x_-,v_-))$ where $x(t,E,q_0,q)$ is the solution of \eqref{1.1} given by \eqref{1.5a}, and we have 
\begin{equation}
k_0(E,q_0,q)=\dot x(t_-(x_-,v_-)),\ k(E,q_0,q)=\dot x(t_+(x_-,v_-)).\label{301}
\end{equation}
We proved that the scattering map $S_E$ uniquely determines the data $k_0(E,q_0,q),$ $k(E,q_0,q),$ $(q_0,q)\in\pa B(0,R)^2, q_0\not=q$, when $(q_0,q)=(x(t_-(x_-,v_-)),x(t_+(x_-,v_-)))$ for 
$(x_-,v_-)\in {\cal D}(S_E)$. 
And using again Lemma \ref{l1.4.1} we know that for any $(q_0,q)\in \pa B(0,R)^2$, $q_0\not=q$, the solution $x(t,E,q_0,q)$ given by \eqref{1.5a} satisfies 
\eqref{1.6} and \eqref{1.7} for some $(x_\pm,v_\pm)\in \R^n\times \S^1_E$ and that $|x(t,E,q_0,q)|>R$ for $t<0$ and $t>s(E,q_0,q)$.
Thus $S_E$ uniquely determines the data $k_0(E,q_0,q),$ $k(E,q_0,q),$ $(q_0,q)\in\pa B(0,R)^2, q_0\not=q$.\hfill $\Box$

\section{Proof of Theorem \ref{thm}}
\label{sec_thm}
In this section we assume that the electromagnetic field $(V,B)$ in equation \eqref{1.1} satisfies \eqref{1.3} and \eqref{1.4} and is so that $B\equiv 0$ and $V$ is spherically symmetric outside $B(0,R)$ for some $R>0$. Let $W\in C^2([R,+\infty),\R)$ be defined by $V(x)=W(|x|)$ for $x\not\in B(0,R)$.
From \eqref{1.3} it follows that
\begin{equation}
\sup_{r>R}(1+r)^\alpha|W(r)|\le \beta_0\textrm{ and } \sup_{r>R}(1+r)^{\alpha+1}|W'(r)|\le \beta_1,\label{1.3W}
\end{equation}
where $W'$ denotes the derivative of $W$.

\subsection{The function $r_{\min,.}$}
Set
\begin{equation}
\beta: =(2E+2\beta_0)^{1\over 2}\max\big(R,\big({\beta_1+2\beta_0\over 2E}\big)^{1\over \alpha}\big).\label{4.8}
\end{equation}
Then for $q\ge\beta$ consider the real number $r_{\min,q}$ defined by 
\begin{equation}
r_{\min,q}=\sup\{r\in  (R,+\infty)\ | \ W(r)+{q^2\over 2r^2}=E\}.\label{4.2}
\end{equation}
The function $r_{\min,.}$ has the following properties.

\begin{lemma}
\label{lemrmin}
The function $r_{\min,.}$ is a $C^2$ strictly increasing function from $[\beta,+\infty)$ to $(R,+\infty)$ and we have  $r_{\min,q}^3W'(r_{\min,q})<q^2$ for $q\ge\beta$ and
\begin{equation}
W(r_{\min,q})+{q^2\over 2r_{\min,q}^2}=E,\ {d r_{\min,q}\over dq}={qr_{\min,q}\over q^2-r_{\min,q}^3W'(r_{\min,q})}>0.\label{4.7b}
\end{equation}
In addition the following estimates and asymptotics at $+\infty$ hold
\begin{equation}
{q\over \sqrt{2E+2\beta_0}} \le r_{\min,q} \le {q\over \Big( 2E-2\beta_0q^{-\alpha}(2\beta_0+2E)^{\alpha\over2}\Big)^{1\over 2}},\ \textrm{ for }\ q\in [\beta,+\infty),\label{4.6a}
\end{equation}
\begin{equation}
r_{\min,q}={q\over \sqrt{2E}}+O(q^{1-\alpha}),\textrm{ as }q\to +\infty.\label{4.6c}
\end{equation}
\end{lemma}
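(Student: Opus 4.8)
The plan is to read $r_{\min,q}$ as the outermost turning point of the effective radial potential $U_q(r):=W(r)+\frac{q^2}{2r^2}$ on $(R,+\infty)$, so that $r_{\min,q}$ is the largest solution of $U_q(r)=E$ and $U_q'(r)=W'(r)-q^2/r^3$. The point is that the desired inequality $r^3W'(r)<q^2$ is exactly $U_q'(r)<0$, i.e.\ the assertion that the turning point is nondegenerate. Throughout I would use the pointwise consequences of \eqref{1.3W}, namely $|W(r)|\le\beta_0 r^{-\alpha}$ and $|W'(r)|\le\beta_1 r^{-\alpha-1}$ for $r>R$ (using $1+r>r$), and these will in fact be \emph{strict} after multiplication by powers of $r$.

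First I would establish an a priori enclosure for every solution of $U_q(r)=E$. If $U_q(r)=E$ then $q^2/(2r^2)=E-W(r)\in(0,E+\beta_0]$, which already forces $r\ge q/\sqrt{2E+2\beta_0}=:r_*$; feeding this back into $E-W(r)\ge E-\beta_0 r_*^{-\alpha}=E-\beta_0 q^{-\alpha}(2E+2\beta_0)^{\alpha/2}$ yields $r\le r^*:=\frac{q}{(2E-2\beta_0 q^{-\alpha}(2E+2\beta_0)^{\alpha/2})^{1/2}}$, the right-hand side of \eqref{4.6a}. Here I must check the radicand is positive: this is where the factor $(\frac{\beta_1+2\beta_0}{2E})^{1/\alpha}$ in \eqref{4.8} enters, since $q\ge\beta$ gives $q^{-\alpha}(2E+2\beta_0)^{\alpha/2}\le 2E/(\beta_1+2\beta_0)$ and $\beta_1+2\beta_0>2\beta_0$. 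The factor $R$ in \eqref{4.8} serves the complementary role of guaranteeing $r_*\ge R$. Since $U_q(r_*)=W(r_*)+(E+\beta_0)>E$ (as $W(r_*)>-\beta_0$ strictly) while $U_q(r)\to 0<E$, the intermediate value theorem produces a solution in $(r_*,+\infty)$; hence the set in \eqref{4.2} is nonempty and contained in $(r_*,r^*]\subset(R,+\infty)$. Thus $r_{\min,q}$ is attained, lies in $(R,+\infty)$, satisfies $U_q(r_{\min,q})=E$ (the first identity in \eqref{4.7b}), and obeys \eqref{4.6a}.

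Next comes the nondegeneracy $r_{\min,q}^3W'(r_{\min,q})<q^2$, which I regard as the heart of the lemma. Writing $q^2=2r_{\min,q}^2(E-W(r_{\min,q}))$ from the energy identity and dividing by $r_{\min,q}^2>0$, the claim is equivalent to $r_{\min,q}W'(r_{\min,q})+2W(r_{\min,q})<2E$. I would bound the left-hand side by $|rW'(r)|+|2W(r)|<(\beta_1+2\beta_0)r^{-\alpha}$, strictly (again from $1+r>r$), evaluate at $r=r_{\min,q}\ge r_*$, and invoke $q\ge\beta$ exactly as above to get $(\beta_1+2\beta_0)r_{\min,q}^{-\alpha}\le 2E$; the strictness of the first step then gives the strict inequality $<2E$. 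This is the sole reason $\beta$ is chosen as in \eqref{4.8}, and it says precisely that $U_q'(r_{\min,q})<0$, so the outer turning point is a simple zero of $U_q-E$.

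With simplicity established I would obtain smoothness and monotonicity from the implicit function theorem applied to $F(r,q):=U_q(r)-E$, which is $C^2$ with $\pa_r F(r_{\min,q},q)=U_q'(r_{\min,q})<0$, producing a local $C^2$ branch $\rho(q)$ solving $F=0$. The one genuinely delicate point—what I expect to be the main obstacle—is to certify that this branch coincides with the \emph{sup}-defined $r_{\min,q}$ rather than some smaller turning point. I would settle this by showing $r_{\min,\cdot}$ is continuous: upper semicontinuity follows from the locally uniform enclosure $[r_*,r^*]$ together with continuity of $U_q$ in $(r,q)$, while lower semicontinuity follows from the strict sign $U_q(r_{\min,q}-\delta)>E$ (a consequence of simplicity) propagated to nearby $q$ by continuity plus the intermediate value theorem. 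Continuity forces $r_{\min,q}=\rho(q)$ near each $q_0$, so $r_{\min,\cdot}\in C^2$ on $[\beta,+\infty)$ (one-sided at $\beta$). Differentiating $W(r_{\min,q})+q^2/(2r_{\min,q}^2)=E$ in $q$ gives $\frac{dr_{\min,q}}{dq}=\frac{qr_{\min,q}}{q^2-r_{\min,q}^3W'(r_{\min,q})}$, positive by the nondegeneracy, which proves \eqref{4.7b} and strict monotonicity. Finally, for \eqref{4.6c} I would rewrite the identity as $r_{\min,q}^2=\frac{q^2}{2E}\big(1-W(r_{\min,q})/E\big)^{-1}$ and use $|W(r_{\min,q})|\le\beta_0 r_{\min,q}^{-\alpha}\le\beta_0(2E+2\beta_0)^{\alpha/2}q^{-\alpha}=O(q^{-\alpha})$ to expand the square root, yielding $r_{\min,q}=q/\sqrt{2E}+O(q^{1-\alpha})$.
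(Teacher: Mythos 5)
Your proposal is correct and follows essentially the same route as the paper's proof: an a priori enclosure of all solutions of $W(r)+q^2/(2r^2)=E$ derived from the energy identity and the decay bounds \eqref{1.3W}, the key nondegeneracy inequality $rW'(r)+2W(r)<2E$ forced by the choice of $\beta$ in \eqref{4.8}, and then the implicit function theorem to get $C^2$ regularity, the derivative formula \eqref{4.7b}, and monotonicity, with \eqref{4.6c} read off from the identity $q^2=2r_{\min,q}^2(E-W(r_{\min,q}))$. The only point where you go beyond the paper is the semicontinuity argument identifying the implicit-function branch with the sup-defined $r_{\min,q}$ — a gap the paper leaves implicit, and which can also be closed by observing that every root of the turning-point equation in the admissible range is nondegenerate with the same sign of $\pa_r$, so the root is in fact unique.
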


Lemma \ref{lemrmin} and works \cite{[F],[KKS]} allow the reconstruction of the force $F$ in a neighborhood of infinity. In sections \ref{scatangle} and \ref{reconst} we develop the reconstruction procedure given in \cite{[F],[KKS]}.

\subsection{The scattering angle}
\label{scatangle}
Let ${\cal P}$ be a plane of $\R^n$ containing $0$ and let $(e_1,e_2)$ be an orthonormal basis of ${\cal P}$.
For $(v_1,v_2)\in \R^2$ and for $v=v_1e_1+v_2e_2$ we define $v^\bot\in{\cal P}$ by $v^\bot=-v_2e_1+v_1e_2$.

Let $q\ge\beta$.
Then set $x_-:=(2E)^{-{1\over 2}}qe_1$ and $v_-=\sqrt{2E}e_2$. We have $x_-\cdot v_-=0$ and $x_-\cdot v_-^\bot=-q$, and for such couple $(x_-,v_-)$ we consider $x_q(t)$ the solution of \eqref{1.1} with energy $E$ and with initial conditions \eqref{1.3} at $t\to -\infty$. Let $t_-:=\sup\{t\in \R\ |\ |x_q(s)|\ge R \textrm{ for }s\in (-\infty,t)\}$. We will prove that $t_-=+\infty$. Since the force $F$ in \eqref{1.1} is radial outside $B(0,R)$ we obtain that $x_q(t)\in {\cal P}$ for $t\in (-\infty,t_-)$.

We introduce polar coordinates in ${\cal P}$. We write $x_q(t)=r_q(t)(\cos(\theta_q(t))e_1(t)+\sin(\theta_q(t))e_2)$ for $t\in (-\infty,t_-)$ where the functions $r_q$, $\theta_q$, satisfy the following ordinary differential equations
\begin{eqnarray}
\ddot r_q(t)&=&-W'(r_q(t))+{q'^2\over r_q(t)^3},\label{2.3e}\\
r_q(t)^2\dot \theta_q(t)&=&q',\textrm{ for some }q'\in \R.\label{2.3f}
\end{eqnarray}
Asymptotic analysis of $x_q(t)$ at $t=-\infty$ using the initial conditions \eqref{1.3} and $\dot y_-(t)=o(t^{-1})$ as $t\to -\infty$ (see for example \cite[Theorem 3.1]{[No]} for this latter property) shows that $q'=q$. We refer the reader to the Appendix for details.

The energy $E$ defined by \eqref{1.2} is then written as follows
\begin{equation}
2E=\dot r_q(t)^2+{q^2\over r_q(t)^2}+2W(r_q(t)).\label{2.5}
\end{equation}

Let $t_q=\inf\{t\in (-\infty,t_-)\ |\ r_q(t)=r_{\min,q}\}$. Then using \eqref{4.2} and \eqref{2.5} we have $\dot r_q(t_q)=0$. Since $r_q$ satisfies the second order differential equation \eqref{2.3e} we obtain that $t_-=+\infty$, $r_q(t_q+t)=r_q(t_q-t)$ for $t\in\R$, and $\pm\dot r_q(t)>0$ for
$\pm t>t_q$. We thus define
\begin{equation}
g(q)=\int_{-\infty}^{+\infty}{dt\over r_q(t)^2}=2\int_{t_q}^{+\infty}{dt\over r_q(t)^2}.\label{4.0}
\end{equation}
The integral \eqref{4.0} is absolutely convergent and from \eqref{2.3f} it follows that $qg(q)=\int_\R\dot \theta_q(s) ds$ is the scattering angle of $x_q(t)$, $t\in \R$.

Note also that from \eqref{2.3f} we have
\begin{equation}
S_{1,E}(\sqrt{2E}e_2,(2E)^{-{1\over 2}}qe_1)=\sqrt{2E}\Big(\cos\big(qg(q)+{\pi\over 2}\big),\sin\big(qg(q)+{\pi\over 2}\big)\Big),\label{4.104a}
\end{equation}
for $q\in [\beta,+\infty)$ and where $S_{1,E}$ is the first component of the scattering map $S_E$.
Note that $qg(q)\to 0$ as $q\to +\infty$ and  that $g$ is continuous on $[\beta,+\infty)$  (these properties can be proven by using \cite[Theorem 3.1]{[No]} and continuity of the flow of \eqref{1.1}).
Hence using \eqref{4.104a} we obtain that $S_{1,E}$ uniquely determines the function $g$.

\subsection{Reconstruction formulas}
\label{reconst}
Let $\chi$ be the strictly increasing function from $[0,\beta^{-2})$ to $[0,r_{\min,\beta}^{-1})$, continuous on $[0,\beta^{-2})$ and $C^2$ on $(0,\beta^{-2})$, defined by
\begin{equation}
\chi(0)=0,\ \textrm{ and }\ \chi(\sigma)=r_{\min,\sigma^{-{1\over 2}}}^{-1},\ \textrm{ for }\ \sigma\in (0,\beta^{-2}).\label{4.10}
\end{equation}
Let $\phi:(0,\chi(\beta^{-2}))\to (0,\beta^{-2})$ denote the inverse function of $\chi$.
From \eqref{4.7b} and \eqref{4.10} it follows that
\begin{eqnarray}
2(E-V(\chi(u)^{-1}))&=&{\chi(u)^2\over u},\ \textrm{for}\ u\in (0,\beta^{-2}),\label{4.203b}\\
2(E-V(s^{-1}))\phi(s)&=&s^2,\ \textrm{for}\ s\in (0,\chi(\beta^{-2})).\label{4.203}
\end{eqnarray}

Define the function $H$ from $(0,\beta^{-2})$ to $\R$ by
\begin{equation}
H(\sigma):=\int_0^\sigma {g(u^{-{1\over 2}})du\over 2\sqrt{u}\sqrt{\sigma-u}}\ \textrm{ for }\ \sigma\in (0,\beta^{-2}), \label{4.11a}
\end{equation}
Hence $H$ is known from the first component of the scattering map $S_E$.

The following formulas are valid (see Appendix for more details)
\begin{equation}
H(\sigma)
=\pi\int_0^{\chi(\sigma)}{ds\over \sqrt{2(E-V(s^{-1}))}},\ 
{1\over \pi\sqrt{\sigma}}{dH\over d\sigma}(\sigma)={d\over d\sigma}\ln(\chi(\sigma)),\label{4.11c}
\end{equation}
for $\sigma\in (0,\beta^{-2})$.

Then note that from \eqref{4.6c} it follows that
$\chi(\sigma)=\Big((2E)^{1\over 2}\sigma^{-{1\over 2}}+O(\sigma^{-1+\alpha\over 2})\Big)^{-1}=(2E)^{-{1\over 2}}\sigma^{1\over 2}+O(\sigma^{\alpha+1\over 2})$ as $\sigma\to 0^+$, and
$
\ln \Big({(2E)^{1 \over 2}\chi(\sigma)\sigma^{-{1\over 2}}}\Big)=\ln(1+O(\sigma^{\alpha\over 2}))\to 0$ as $\sigma\to 0^+$ 
(note that we just need the assumption $\alpha >0$).
Therefore we obtain the following reconstruction formulas
\begin{eqnarray}
\chi(\sigma)&=&(2E)^{-{1\over 2}}\sigma^{1\over 2}e^{\int_0^\sigma\big({1\over \pi\sqrt{s}}{dH\over ds}(s)-{1\over 2s}\big)ds}\ \textrm{for}\ \sigma\in (0,\beta^{-2}),
\label{4.12a}\\
W(s)&=&E-{1\over 2s^2\phi(s^{-1})}\ \textrm{for}\ s\in(r_{\min,\beta},+\infty).\label{4.12b}
\end{eqnarray}

Set
\begin{equation}
\beta'={\beta\over \Big( 2E-2\beta_0\beta^{-\alpha}(2\beta_0+2E)^{\alpha\over2}\Big)^{1\over 2}}.\label{4.12}
\end{equation}
Then note that from \eqref{4.6a} and \eqref{4.8}, it follows that
\begin{equation}
r_{\min,\beta}\le \beta'.\label{4.13}
\end{equation}
Therefore from \eqref{4.12a} and \eqref{4.12b} we obtain that $W$ is determined by the first component of the scattering map $S_E$ on $(\beta',+\infty)$.

The proof of Theorem \ref{thm} then relies on this latter statement and on Proposition \ref{prop}. \hfill $\Box$

\section{The relativistic multidimensional Newton equation}
\label{sec_rel}

\subsection{Uniqueness results}
Let $c>0$. 
Consider the relativistic multidimensional Newton equation in an electromagnetic field
\begin{eqnarray}
\dot p &=&F(x,\dot x):=-\nabla V(x)+{1\over c}B(x)\dot x,\label{6.1}\\
p&=&{\dot x \over \sqrt{1-{|\dot x|^2 \over c^2}}},\ \dot p={dp\over dt},\ \dot x={dx\over dt},\ x\in C^2(\R,\R^n),\nonumber
\end{eqnarray}
where $(V,B)$ satisfy \eqref{1.4B}, \eqref{1.3} and \eqref{1.4}.
The equation \eqref{6.1} is an equation for $x=x(t)$ and is the equation of motion in $\R^n$ of a relativistic particle of mass $m=1$
and charge $e=1$ in an external static electromagnetic field described by the scalar potential $V$ and the magnetic field $B$ (see \cite{[E]} and, for example, \cite[Section 17]{[LL2]}). In this equation
$x$ is the position of the particle, $p$ is its impulse, $F$ is the force acting on the particle, $t$ is the time and $c$ is the speed of light.

For the equation \eqref{6.1} the energy
\begin{equation}
E=c^2\sqrt{1+{|p(t)|^2 \over c^2}}+V(x(t))={c^2\over \sqrt{1-{|\dot x(t)|^2\over c^2}}}+V(x(t)),\label{6.2}
\end{equation}
is an integral of motion.
We denote by $B_c$ the euclidean open ball whose radius is c and whose centre is 0.

Under the conditions \eqref{1.2}, we have the following properties (see \cite{[Y]}): for any 
$(v_-,x_-)\in B_c\times\R^n,\ v_-\neq 0,$
the equation \eqref{6.1}  has a unique solution $x\in C^2(\R,\R^n)$ that satisfies \eqref{1.6} 
where $y_-$ in \eqref{1.6} satisfies $|\dot y_-(t)|+|\ y_-(t)|\to 0,$ as $t\to -\infty$;  in addition for almost any 
$(v_-,x_-)\in B_c\times \R^n,\ v_-\neq 0,$
the unique solution $x(t)$ of equation \eqref{1.1} that satisfies \eqref{1.6} also satisfies the asymptotics
\begin{equation}
{x(t)=tv_++x_++y_+(t),}\label{6.7}
\end{equation}
where $v_+\neq 0$, $|\dot y_+(t)|+|y_+(t)|\to 0$, as $t \to +\infty$.
At fixed energy $E>c$, we denote by $\S_{E,c}$ the set $\{v_-\in \R^n\ |\ |v_-|=c\sqrt{1-{c^4\over E^2}}\}$ and  we denote by ${\cal D}(S_E^{\rm rel})$ the set of $(v_-,x_-)\in \S_{E,c}\times \R^n$ for which the unique solution $x(t)$ of equation \eqref{6.1} that satisfies \eqref{1.6} also satisfies \eqref{1.7}.
We have that ${\cal D}(S_E^{\rm rel})$  is an open set of $\S_E \times \R^n$ and ${\rm Mes}((\S_{E,c} \times \R^n) \b {\cal D}(S_E^{\rm rel}))=0$ for the Lebesgue
measure on $\S_{E,c} \times \R^n$.
The map
$S_E^{\rm rel}: {\cal D}(S_E^{\rm rel}) \to \S_{E,c}\times\R^n $
given by $
S_E^{\rm rel}(v_-,x_-)=(v_+,x_+),$
is called the scattering map at fixed energy $E>c^2$ for the equation \eqref{6.1}. Note that
if $V(x)\equiv 0$ and $B(x)\equiv 0$, then $v_+=v_-,\ x_+=x_-,\ (v_-,x_-)\in B_c \times \R^n,\ v_-\neq 0$.

We consider the  inverse scattering problem  at fixed energy for  equation \eqref{6.1} that is similar to the inverse problem \eqref{1.9}
\begin{equation}
\textrm{Given }S_E^{\rm rel} \textrm{ at fixed energy }E>c^2,\ \textrm{find }(V,B).\label{6.9}
\end{equation}
Note that using the conservation of energy we obtain that  if $E < c^2+ \sup_{\R^n}V$ then $S_E$ does not
determine uniquely $V$.

For problem \eqref{6.9}  Theorem \ref{thm} and Proposition \ref{prop} still hold. 
In Sections \ref{prop_rel} and \ref{thm_rel} we sketch the proof of Theorem \ref{thm} and Proposition \ref{prop} for equation \eqref{6.1}.

For inverse scattering at high energies for the relativistic multidimensional Newton equation and inverse scattering in relativistic quantum mechanics see  \cite{[Jo2]} and references therein.

Concerning the inverse problem for \eqref{6.1} in the one-dimensional case, we can mention the work \cite{[FR]}.

\subsection{Proof of Proposition \ref{prop} for equation \eqref{6.1}}
\label{prop_rel}
We first consider the analog of Lemma \ref{l1.4.1}.
\begin{lemma}
\label{l1.4.1rel}
Let  $E>c^2$ and let $R_E$ and $C_E^{\rm rel}$ be defined by
\begin{equation}
C_E^{\rm rel}:=\min\Big({E-c^2\over 2\beta_0},{c^2\big(\big({E-c^2\over 4c^2}+1\big)^2-1\big)\over 4\beta_1 n \big({3(E-c^2)\over 2c^2}+1\big)^2}\Big),\
\sup_{|x|\ge R_E}(1+|x|)^{-\alpha}\le {C_E^{\rm rel}\over 2}.\label{111r}
\end{equation}
If $x(t)$ is a solution of equation (1.1) of energy $E$ such that $|x(0)|<R_E$ and if there exists a time $T>0$ such that $x(T)=R_E$ then
\begin{equation}
|x(t)|^2\ge R_E^2+  {1\over 2}{c^2\big(\big({E-c^2\over 4c^2}+1\big)^2-1\big)\over  \big({3(E-c^2)\over 2c^2}+1\big)^2}|t-T|^2\textrm{ for }t\in (T,+\infty),\label{114r}
\end{equation}
and there exists a unique $(x_+,v_+)\in \R^n\times\S_{E,c}$ so that
$x(t)=x_++tv_++y_+(t)$, $t\in \R$,
where $|y_+(t)|+|\dot y_+(t)|\to 0$ as $t\to +\infty.$
\end{lemma}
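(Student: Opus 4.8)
The plan is to follow the proof of the nonrelativistic Lemma~\ref{l1.4.1} in the Appendix, replacing the Newtonian kinematics by their relativistic analogues and showing that $\rho(t):=|x(t)|^2$ is uniformly convex as long as the trajectory stays outside $B(0,R_E)$. The starting point is energy conservation \eqref{6.2}: writing $\gamma(t):=(1-|\dot x(t)|^2/c^2)^{-1/2}$ one gets $\gamma=(E-V(x))/c^2$, so that both $\gamma$ and the speed $|\dot x|^2=c^2(1-\gamma^{-2})$ are governed by the position through $V$ alone. The definition of $R_E$ in \eqref{111r}, together with the bound $C_E^{\rm rel}\le (E-c^2)/(2\beta_0)$ and \eqref{1.3}, forces $|V(x)|\le (E-c^2)/4$ whenever $|x|\ge R_E$; hence on that region $\gamma\ge 1+3(E-c^2)/(4c^2)$, and since $c^2(1-\gamma^{-2})$ increases in $\gamma$ a direct (crude) comparison of numerators and denominators yields $|\dot x|^2\ge L$, where $L:=c^2((\frac{E-c^2}{4c^2}+1)^2-1)(\frac{3(E-c^2)}{2c^2}+1)^{-2}$ is the coefficient in \eqref{114r}.

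Next I would differentiate twice: $\ddot\rho=2|\dot x|^2+2x\cdot\ddot x$. From $p=\gamma\dot x$ and $\dot p=F$ in \eqref{6.1} one gets $\gamma\ddot x=F-\dot\gamma\dot x$, with $\dot\gamma=-c^{-2}\dot x\cdot\nabla V$; this $\dot\gamma\dot x$ term is the genuinely new feature absent in the nonrelativistic computation. Using $|\dot x|\le c$, $\gamma\ge1$, the decay estimates \eqref{1.3}--\eqref{1.4}, the antisymmetry $\dot x\cdot B\dot x=0$, and the elementary bound $|B(x)v|\le n\beta_1(1+|x|)^{-\alpha-1}|v|$ for the magnetic matrix, one estimates both $x\cdot F$ and $\dot\gamma(x\cdot\dot x)$ by constant multiples of $(1+|x|)^{-\alpha}$ and obtains $|x\cdot\ddot x|\le 2n\beta_1(1+|x|)^{-\alpha}$ on $|x|\ge R_E$ (here $n\ge2$ absorbs the lower-order constants). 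The second entry in the definition of $C_E^{\rm rel}$ is chosen precisely so that $(1+|x|)^{-\alpha}\le C_E^{\rm rel}/2$ gives $2|x\cdot\ddot x|\le L$ there, whence $\ddot\rho\ge 2L-L= L>0$ whenever $|x|\ge R_E$. The main obstacle is this bookkeeping: tracking $\gamma$ and the extra $\dot\gamma\dot x$ contribution while choosing the crude inequalities so that exactly the constants in $C_E^{\rm rel}$ and \eqref{114r} come out.

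Finally, the escape estimate \eqref{114r} follows from uniform convexity by a first-hitting-time argument. Taking $T$ to be the first time at which $|x|=R_E$ (so $\rho<R_E^2$ on $[0,T)$ and therefore $\dot\rho(T)\ge0$), I would argue that $\rho$ can never return below $R_E^2$: on any maximal interval $[T,t_*)$ on which $\rho\ge R_E^2$ one has $\ddot\rho\ge L$, hence $\dot\rho(t)\ge \dot\rho(T)+L(t-T)\ge0$ and, integrating once more, $\rho(t)\ge R_E^2+\frac{1}{2}L(t-T)^2$; this is incompatible with $t_*<+\infty$, so the bound \eqref{114r} holds for all $t>T$. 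Since \eqref{114r} forces $|x(t)|$ to grow at least linearly while $|\dot x|\le c$, the force along the trajectory is integrable (in fact $t\,F$ is integrable because $\alpha>1$), so $p(t)$ converges, $\dot x(t)\to v_+\in\S_{E,c}$ and $x(t)-tv_+$ converges to some $x_+$; uniqueness of $(x_+,v_+)$ is then immediate, which completes the proof.
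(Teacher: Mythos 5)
Your proposal is correct and takes exactly the route the paper intends: the paper offers no separate argument for Lemma \ref{l1.4.1rel} beyond the remark that its proof is ``similar to the proof of Lemma \ref{l1.4.1}'', and your adaptation (uniform convexity of $|x(t)|^2$ outside $B(0,R_E)$ via the relativistic speed bound $|\dot x|^2=c^2(\gamma^2-1)/\gamma^2$ with $\gamma=(E-V)/c^2$ pinched between $1+\tfrac{E-c^2}{4c^2}$ and $1+\tfrac{3(E-c^2)}{2c^2}$ by the first entry of $C_E^{\rm rel}$, the extra $\dot\gamma\dot x$ term controlled by the second entry, then the first-hitting-time integration and integrability of $F$ and of $tF$ along the escaping orbit) is precisely that adaptation. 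The constants check out with room to spare (your bound $|x\cdot\ddot x|\le 2n\beta_1(1+|x|)^{-\alpha}$ is more safely $(n+2\sqrt n)\beta_1(1+|x|)^{-\alpha}\le 3n\beta_1(1+|x|)^{-\alpha}$, but the factor $4\beta_1 n$ in \eqref{111r} still yields $\ddot{\overline{|x|^2}}\ge L$), so there is no gap.
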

The proof of Lemma \ref{l1.4.1rel} is similar to the proof of Lemma \ref{l1.4.1}.

The solutions $x(t)$ of equation \eqref{6.1} in $B(0,R)$ for some $R>0$ also have properties \eqref{1.4a} and \eqref{1.5a}  at fixed and sufficiently large energy. Therefore at fixed and sufficiently large energy we consider the inverse kinematic problem in a ball $B(0,R)$ for equation \eqref{6.1} similar to the inverse kinematic problem given in Section \ref{invkin}. Then Lemma \ref{inv} still holds (see \cite[Theorem 1.2]{[Jo3]}) and the connection  between boundary data of the inverse kinematic problem and the scattering map $S_E^{\rm rel}$ is similar to the one given for the nonrelativistic case in Section \ref{connect} (note that the radius $R$ has also to be chosen so that $\sup_{|x|\ge R}(1+|x|)^{-\alpha}<c^2/(144\beta_1 n)=\lim_{E\to +\infty}C_E^{\rm rel} $). This proves Proposition \ref{prop} for equation \eqref{6.1}.\hfill $\Box$

\subsection{Proof of Theorem \ref{thm}  for equation \eqref{6.1}}
\label{thm_rel}
We assume that the electromagnetic field $(V,B)$ in equation \eqref{6.1} satisfies \eqref{1.3} and \eqref{1.4} and is so that $B\equiv 0$ and $V$ is spherically symmetric outside $B(0,R)$ for some $R>0$. Let $W\in C^2([R,+\infty),\R)$ be defined by $V(x)=W(|x|)$ for $x\not\in B(0,R)$. We give the analog of Lemma \ref{lemrmin}.
Let 
$\tilde \beta=(2\beta_0^2)^{1\over \alpha}\Big(E(2\beta_0+\beta_1)+\beta_1\beta_0-\Big((E(2\beta_0+\beta_1)+\beta_1\beta_0)^2-4\beta_0^2(E^2-c^4)\Big)^{1\over 2}\Big)^{-{1\over \alpha}}$ and set
\begin{equation}
\beta: =\max\left(\tilde \beta
,\left({\beta_0\over E-c^2}\right)^{1\over \alpha}E^{-1}c\sqrt{(E+\beta_0)^2-c^4}, {cR\sqrt{(E+\beta_0)^2-c^4}\over E}\right).\label{4k8} 
\end{equation}
Then for $q\ge\beta$ consider the real number $r_{\min,q}$ defined by 
\begin{equation}
r_{\min,q}=\sup\{r\in  (R,+\infty)\ | \ (E-W(r))^2-c^4-{q^2E^2\over c^2r^2}=0\}.\label{8.2}
\end{equation}
The function $r_{\min,.}$ has the following properties.

\begin{lemma}
\label{lemrmin2}
The function $r_{\min,.}$ is a $C^2$ strictly increasing function from $[\beta,+\infty)$ to $(R,+\infty)$ and we have  $r_{\min,q}^3W'(r_{\min,q}){c^2(E-W(r_{\min,q}))\over E^2}<q^2$ for $q\ge\beta$ and
\begin{eqnarray}
&&(E-W(r_{\min,q}))^2-c^4-{q^2E^2\over c^2r_{\min,q}^2}=0,\label{8.7a}\\
&&{d r_{\min,q}\over dq}={E^2qr_{\min,q}\over  -c^2(E-W(r_{\min,q}))r_{\min,q}^3W'(r_{\min,q})+q^2E^2}>0.\label{8.7b}
\end{eqnarray}
In addition the following estimates and asymptotics at $+\infty$ hold
\begin{equation}
 {qE\over c\sqrt{(E+\beta_0)^2-c^4}}\le r_{\min,q}\le {Eq\over c\sqrt{\Big(E-\beta_0 \big({qE\over c\sqrt{(E+\beta_0)^2-c^4}}\big)^{-\alpha}\Big)^2-c^4}},\label{8k6b}
\end{equation}
for $q\ge\beta$, and
\begin{equation}
r_{\min,q}={qE\over c\sqrt{E^2-c^4}}+O(q^{1-\alpha}),\textrm{ as }q\to +\infty.\label{8k6c}
\end{equation}
\end{lemma}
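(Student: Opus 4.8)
The plan is to follow the scheme of the proof of Lemma~\ref{lemrmin}, replacing the nonrelativistic effective balance $W(r)+q^2/(2r^2)=E$ by its relativistic analogue. I would introduce the auxiliary function
$$f(r,q):=(E-W(r))^2-c^4-{q^2E^2\over c^2r^2},\qquad r\in[R,+\infty),$$
so that, by \eqref{8.2}, $r_{\min,q}$ is the largest zero of $f(\cdot,q)$ in $(R,+\infty)$, and rewrite $f(r,q)=0$ as $\psi(r)=q^2E^2/c^2$ with $\psi(r):=\big((E-W(r))^2-c^4\big)r^2$. A direct computation gives $\pa_rf=-2(E-W)W'+2q^2E^2/(c^2r^3)$ and $\pa_qf=-2qE^2/(c^2r^2)$, and at any zero of $f$ one has $q^2E^2/(c^2r^3)=\big((E-W)^2-c^4\big)/r$, whence $\pa_rf=\psi'(r)/r^2$. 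The inequality asserted in the lemma, $r^3W'c^2(E-W)/E^2<q^2$, is then exactly $\pa_rf(r_{\min,q},q)>0$, i.e.\ $\psi'(r_{\min,q})>0$.

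Existence and localization come first. Since $E>c^2$ and $W(r)\to0$ as $r\to+\infty$ by \eqref{1.3W}, we have $f(r,q)\to E^2-c^4>0$; on the other hand the term $cR\sqrt{(E+\beta_0)^2-c^4}/E$ in \eqref{4k8} is designed so that $q\ge\beta$ forces $f(R,q)<0$. By the intermediate value theorem the defining set of \eqref{8.2} is a nonempty subset of $(R,+\infty)$, bounded above, so $r_{\min,q}$ is well defined and \eqref{8.7a} holds by continuity. Using \eqref{8.7a} and $|W|\le\beta_0(1+r)^{-\alpha}\le\beta_0$ at a zero gives $q^2E^2/(c^2r^2)=(E-W)^2-c^4\le(E+\beta_0)^2-c^4$, hence every zero satisfies the lower bound $r\ge\rho:=qE/(c\sqrt{(E+\beta_0)^2-c^4})$, which is the left inequality of \eqref{8k6b}.

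The main obstacle is the key inequality $\psi'(r_{\min,q})>0$, equivalently $(E-W)^2-c^4>(E-W)W'r$ at $r=r_{\min,q}$. Inserting the decay bounds $|W(r)|\le\beta_0r^{-\alpha}$ and $|W'(r)|\le\beta_1r^{-\alpha-1}$ and the lower bound $r\ge\rho$ (so $r^{-\alpha}\le\rho^{-\alpha}$), this reduces to a quadratic inequality in the small parameter $\rho^{-\alpha}$ whose left-hand side tends to $E^2-c^4>0$ as $q\to+\infty$; the thresholds $\tilde\beta$ and $(\beta_0/(E-c^2))^{1/\alpha}E^{-1}c\sqrt{(E+\beta_0)^2-c^4}$ in \eqref{4k8} are precisely the values of $q$ above which, respectively, this quadratic inequality holds and $E-W(r_{\min,q})>c^2$. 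Since the lower bound $r\ge\rho$ holds at \emph{every} zero of $f(\cdot,q)$, the inequality $\pa_rf>0$ holds at every zero, which forces $f(\cdot,q)$ to have a single zero in $(R,+\infty)$; hence the supremum in \eqref{8.2} is attained at this unique zero, and the asserted inequality $r_{\min,q}^3W'c^2(E-W)/E^2<q^2$ is established.

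With $\pa_rf(r_{\min,q},q)>0$ secured, the implicit function theorem applies: as $W\in C^2$, the map $q\mapsto r_{\min,q}$ is $C^2$, and differentiating $f(r_{\min,q},q)=0$ yields $dr_{\min,q}/dq=-\pa_qf/\pa_rf$, which simplifies to \eqref{8.7b}; numerator and denominator being positive gives strict monotonicity, so $r_{\min,\cdot}$ maps $[\beta,+\infty)$ into $(R,+\infty)$ with $r_{\min,q}\to+\infty$. Finally the upper bound in \eqref{8k6b} follows by feeding $|W(r_{\min,q})|\le\beta_0r_{\min,q}^{-\alpha}\le\beta_0\rho^{-\alpha}$ back into \eqref{8.7a}, and the asymptotics \eqref{8k6c} by expanding both sides of \eqref{8k6b} as $q\to+\infty$, the $O(q^{1-\alpha})$ error arising from the $O(q^{-\alpha})$ correction to $W$ evaluated at $r_{\min,q}\sim qE/(c\sqrt{E^2-c^4})$.
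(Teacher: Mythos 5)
Your proof is correct and follows essentially the same route as the paper's (which treats Lemma \ref{lemrmin2} as the direct analogue of Lemma \ref{lemrmin}, proved in the Appendix): localization of the largest zero of $f(\cdot,q)$ via its sign at $r=R$ and at infinity, the lower bound $r_{\min,q}\ge qE/(c\sqrt{(E+\beta_0)^2-c^4})$ fed into the decay of $W$ and $W'$ to get $\partial_r f>0$ at every zero (the thresholds in \eqref{4k8} being exactly what makes the resulting quadratic inequality and the sign condition $E-W>c^2$ hold), and then the implicit function theorem. The one imprecision is that \eqref{8k6c} does not literally follow from ``expanding both sides of \eqref{8k6b}'' --- the lower bound there has the wrong leading constant --- but rather, as your parenthetical indicates, from inserting $W(r_{\min,q})=O(q^{-\alpha})$ back into \eqref{8.7a}, just as \eqref{4.6c} is deduced from the sharper two-sided bound \eqref{4.102} rather than from \eqref{4.6a}.
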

Then take any plane
 ${\cal P}$  containing $0$ and keep notations of Section \ref{scatangle}. 
Let $q\ge\beta$.
Then set $x_-:=(c\sqrt{1-c^4/E^2})^{-1}qe_1$ and $v_-=c\sqrt{1-c^4/E^2}e_2$, and consider $x_q(t)$ the solution of \eqref{6.1} with energy $E$ and with initial conditions \eqref{1.3} at $t\to -\infty$.  We write $x_q$ in polar coordinates: $x_q(t)=r_q(t)(\cos(\theta_q(t))e_1(t)+\sin(\theta_q(t))e_2)$ for $t\in (-\infty,t_-)$ where $t_-:=\sup\{t\in \R\ |\ |x_q(s)|\ge R \textrm{ for }s\in (-\infty,t)\}$ and the functions $r_q$, $\theta_q$, satisfy 
\begin{eqnarray}
&&\ddot r_q(t)={-W'(r_q(t))\over \left({E-W(r_q(t))\over c^2}\right)^3}+q^2E^2{E-W(r_q(t))-rW'(r_q(t))\over r^3(E-W(r_q(t)))^3},\label{9.3e}\\
&&{r_q(t)^2\dot \theta_q(t)\over \sqrt{1-{\dot r_q(t)^2+r_q(t)^2\dot \theta_q(t)^2\over c^2}}}={qE\over c^2}.\label{9.3f}
\end{eqnarray}
The energy $E$ defined by \eqref{6.2} is then written as follows
\begin{equation}
1-{\dot r_q(t)^2\over c^2}-{c^4\over (E-W(r_q(t)))^2}-{q^2E^2\over c^2r_q(t)^2(E-W(r_q(t)))^2}=0.\label{9.5}
\end{equation}
We also have 
\begin{equation}
\dot \theta_q(t)={qE\over r_q(t)^2(E-W(r_q(t)))}.\label{9.6}
\end{equation}
Similarly to Section \ref{scatangle} we have
 $t_-=+\infty$, $r_q(t_q+t)=r_q(t_q-t)$ for $t\in\R$, and $\pm\dot r_q(t)>0$ for
$\pm t>t_q$ where $t_q=\inf\{t\in (-\infty,t_-)\ |\ r_q(t)=r_{\min,q}\}$.

We thus define
\begin{equation}
g(q)=\int_{-\infty}^{+\infty}{dt\over r_q(t)^2(E-W(r_q(t)))}=2\int_{t_q}^{+\infty}{dt\over r_q(t)^2(E-W(r_q(t)))}.\label{9.0}
\end{equation}
From \eqref{9.6} $Eqg(q)=\int_\R\dot \theta_q(s) ds$ is the scattering angle of $x_q(t)$, $t\in \R$, and
\begin{equation}
S_{1,E}^{\rm rel}(v_-,x_-)=c\sqrt{1-c^4/E^2}\Big(\cos\big(Eqg(q)+{\pi\over 2}\big),\sin\big(Eqg(q)+{\pi\over 2}\big)\Big),\label{9.104a}
\end{equation}
for $q\in [\beta,+\infty)$ and where $S_{1,E}^{\rm rel}$ is the first component of the scattering map $S_E^{\rm rel}$.
Since $qg(q)\to 0$ as $q\to +\infty$ and  that $g$ is continuous on $[\beta,+\infty)$  $S_{1,E}^{\rm rel}$ uniquely determines the function $g$. 

We now provide the reconstruction formulas for $W$ from $g$ in a neighborhood of infinity .
Let $\chi$ be the strictly increasing function from $[0,\beta^{-2})$ to $[0,r_{\min,\beta}^{-1})$, continuous on $[0,\beta^{-2})$ and $C^2$ on $(0,\beta^{-2})$, defined by
$\chi(0)=0$ and $\chi(\sigma)=r_{\min,\sigma^{-{1\over 2}}}^{-1}$ for $\sigma\in (0,\beta^{-2})$.
Let $\phi:(0,\chi(\beta^{-2}))\to (0,\beta^{-2})$ denote the inverse function of $\chi$.
From \eqref{8.7a} it follows that
$(E-V(\chi(u)^{-1}))^2-c^4-{E^2\chi(u)^2\over c^2 u}=0$ for $u\in (0,\beta^{-2})$, and
$\big((E-V(s^{-1}))^2-c^4\big)\phi(s)-{E^2s^2\over c^2 }=0$ for $s\in (0,\chi(\beta^{-2}))$.

Define the function $H$ from $(0,\beta^{-2})$ to $\R$ by \eqref{4.11a}.
The following formulas are valid
\begin{equation}
H(\sigma)
=\pi\int_0^{\chi(\sigma)}{ds\over \sqrt{(E-V(s^{-1})^2-c^4}},\ 
{E\over c\pi\sqrt{\sigma}}{dH\over d\sigma}(\sigma)={d\over d\sigma}\ln(\chi(\sigma)),\label{4k11c}
\end{equation}
for $\sigma\in (0,\beta^{-2})$. The proof of formulas \eqref{4k11c} is similar to the proof of formulas \eqref{4.11c}.

Then note that from \eqref{8k6c} it follows that
$\chi(\sigma)=\Big({E\over c\sqrt{E^2-c^4}}\sigma^{-{1\over 2}}+O(\sigma^{-1+\alpha\over 2})\Big)^{-1}$ $={c\sqrt{E^2-c^4}\over E}\sigma^{1\over 2}+O(\sigma^{\alpha+1\over 2}),$ $\sigma\to 0^+$, and 
\begin{equation}
\ln \Big({E\chi(\sigma)\over c\sqrt{E^2-c^4}\sigma^{1\over 2}}\Big)=\ln(1+O(\sigma^{\alpha\over 2}))\to 0,\ \textrm{as}\ \sigma\to 0^+.\label{4k10b}
\end{equation}

Therefore we obtain the following reconstruction formulas
\begin{eqnarray}
\chi(\sigma)&=&{c\sqrt{E^2-c^4}\over E}\sigma^{1\over 2}e^{\int_0^\sigma\big({E\over c\pi\sqrt{s}}{dH\over ds}(s)-{1\over 2s}\big)ds},\ \textrm{for}\ \sigma\in (0,\beta^{-2}),
\label{4k12a}\\
W(s)&=&E- \left(c^4+{E^2\over c^2s^2\phi(s^{-1})}\right)^{1\over 2},\ \textrm{for}\ s\in(r_{\min, \beta},+\infty).\label{4k12b}
\end{eqnarray}
Set
$\beta'={\beta\over \Big( 2E-2\beta_0\beta^{-\alpha}(2\beta_0+2E)^{\alpha\over2}\Big)^{1\over 2}}$.
Then note that from \eqref{4k8} and \eqref{8k6b} it follows that
$r_{\min,\beta}\le \beta'$.
Therefore using \eqref{4k12a} and \eqref{4k12b} we obtain that $W$ is determined by the first component of the scattering map $S_E^{\rm rel}$ on $(\beta',+\infty)$.

The proof of Theorem \ref{thm} for equation  \eqref{6.1} then relies on this latter statement and on Proposition \ref{prop} for equation \eqref{6.1}. \hfill $\Box$

\appendix

\section{Proof of Lemmas \ref{l1.4.1} and \ref{lemrmin}}
In this Section we give a proof of Lemmas \ref{l1.4.1} and \ref{lemrmin}, and we give details on the derivation of formulas \eqref{4.11c} and the equality "$q=q'$" in Section \ref{scatangle}.

\begin{proof}[Proof of Lemma \ref{l1.4.1}]
We will use the following estimate.
Under conditions \eqref{1.3} and \eqref{1.4} we have
\begin{eqnarray}
\label{ch1.2.17a}|F(x,v)|&\le&\beta_1n (1+|x|)^{-(\alpha+1)}(1+|v|),\textrm{ for }(x,v)\in\R^n\times\R^n.
\end{eqnarray}
Let
\begin{equation}
I(t) = {1\over 2} |x(t)|^2.\label{120}
\end{equation}
Then using the conservation of energy  and equation (1.1) we have
\begin{eqnarray}
\label{ch1.4.5a}\dot I(t) &=&x(t)\cdot \dot x(t),\\
\label{ch1.4.6b}\ddot I(t)&=&2E -2V(x(t))- x(t)\cdot F(x(t),\dot x(t)),
\end{eqnarray}
for $t\in \R$.
Using \eqref{ch1.2.17a} and estimate on $V$ and $|\dot x|(t)=\sqrt{2(E-V(x(t)))}\le \sqrt{2(E+\beta_0)}$ for $t\in \R$, we obtain that
\begin{eqnarray}
\ddot I(t)&\ge& 2E -2\beta_0(1+|x(t)|)^{-\alpha}- n\beta_1(1+\sqrt{2(E+\beta_0)})|x(t)|(1+|x(t)|)^{-\alpha-1}\nonumber\\
&\ge&(2\beta_0+n\beta_1) (1+\sqrt{2(E+\beta_0)}) (C_E- (1+|x(t)|)^{-\alpha}),\label{121}
\end{eqnarray}
for $t\in \R$.
Hence  we have
\begin{equation}
\ddot I(t)>(2\beta_0+n\beta_1) (1+\sqrt{2(E+\beta_0)}) {C_E\over 2}=E\textrm{ whenever }|x(t)|\ge R_E.\label{122}
\end{equation}
Let $t_0=\inf\{t\in [0,T]\ |\ |x(t)|>R_E\}$. Then using \eqref{120} we have
\begin{equation}
\dot I(t_0)=\lim_{h\to 0^+}{I(t_0)-I(t_0-h)\over h}\ge 0.\label{123}
\end{equation}
Combining \eqref{122} and \eqref{123} we obtain that $|x(t)|\ge R_E$, $\dot I(t)\ge 0$, $\ddot I(t)\ge E$ for $t\in [t_0,+\infty)$.
Moreover we have
\begin{equation}
I(t)=I(t_0)+\dot I(t_0)(t-t_0)+\int_{t_0}^t(s-t_0)\ddot I(s)ds
\ge {1\over 2}R_E^2+{E\over 2}(t-t_0)^2,\label{124}
\end{equation}
for $t\in [t_0,+\infty)$. This proves \eqref{114}.
Then using that 
$|\dot x(t)|\le\sqrt{2(E+\beta_0)}$ for $t\in \R$  and using \eqref{114} we have
\begin{equation}
\label{ch1.4.22}|F(x(\tau),\dot x(\tau))|\le n\beta_1(1+\sqrt{E}|\tau-T|)^{-(\alpha+1)}(1+\sqrt{2(E+\beta_0)}),
\end{equation}
for $\tau\in [T,+\infty [$. Equation (1.1) then gives
\begin{equation}
x(t)=x_++tv_++y_+(t),\label{ch1.4.23}
\end{equation}
for $t\in (0,+\infty)$, where
\begin{eqnarray}
\label{ch1.4.24a}v_+&=&\dot x(0)+\int_0^{+\infty}F(x(\tau),\dot x(\tau))d\tau,\\
\label{ch1.4.24b}x_+&=&x(0)-\int_0^{+\infty}\int_\sigma^{+\infty} F(x(\tau),\dot x(\tau))d\tau
d\sigma\\
\label{ch1.4.24c}y_+(t)&=&\int_t^{+\infty}\int_\sigma^{+\infty}F(x(\tau),\dot x(\tau)))d\tau d\sigma,
\end{eqnarray}
for $t\in (0,+\infty)$, where
by \eqref{ch1.4.22}  the integrals in \eqref{ch1.4.24a}, \eqref{ch1.4.24b} and \eqref{ch1.4.24c} are absolutely convergent ($\alpha >1$) and  $|y_+(t)|+|\dot y_+(t)|\to 0$ as $t\to +\infty$.
\end{proof}

\begin{proof}[Proof Lemma \ref{lemrmin}]
Note that for $q\ge\beta$ we have 
${q^2/(2R^2)}> E+\beta_0(1+R)^{-\alpha}\ge E-W(R)$ and $\lim_{r\to +\infty}q^2/(2r^2)=0<E=\lim_{r\to +\infty}(E-W(r))$. 
Hence using \eqref{4.2} we obtain $r_{\min,q}\in (R,+\infty)$ and
\begin{equation} 
W(r_{\min,q})+{q^2\over 2r_{\min,q}^2}=E,\label{4.2b}
\end{equation}
for $q\in [\beta,+\infty)$.

Let $q\in [\beta,+\infty)$.
From \eqref{4.2b}  it follows that
$q^2/(2r_{\min,q}^2)\le E+\beta_0$ and then $r_{\min,q}\ge q/\sqrt{2(E+\beta_0)}$.
Combining this latter estimate and estimate \eqref{1.3W} we obtain that
\begin{equation}
2\Big(E-\sup_{r\in \big({q\over \sqrt{2(E+\beta_0)}},+\infty\big)}W(r)\Big)\le{q^2\over r_{\min,q}^2}\le 2\Big(E-\inf_{r\in \big({q\over \sqrt{2(E+\beta_0)}},+\infty\big)}W(r)\Big).\label{4.100}
\end{equation}
Then using again \eqref{1.3W} we obtain
\begin{equation}
\sup_{r\in \big({q\over \sqrt{2(E+\beta_0)}},+\infty\big)}|W(r)| \le \beta_0 \left({q\over \sqrt{2(E+\beta_0)}}\right)^{-\alpha}.\label{4.101}
\end{equation}
Combining \eqref{4.100} and \eqref{4.101} we obtain
\begin{equation}
{q\over \sqrt{2E+2\beta_0 {(2(E+\beta_0))^{\alpha\over 2}\over q^\alpha}}}\le r_{\min,q}\le {q\over \sqrt{2E-2\beta_0 {(2(E+\beta_0))^{\alpha\over 2}\over q^\alpha}}}.\label{4.102}
\end{equation}
Estimates \eqref{4.6a} and the asymptotics \eqref{4.6c} follows from \eqref{4.102}.

Note that using \eqref{1.3W} we have
\begin{eqnarray}
rW'(r)&< & \beta_1r^{-\alpha}\le \beta_1 q^{-\alpha}(2(E+\beta_0))^{\alpha\over 2},\label{4.103a}\\
2E-2W(r)&> & 2E-2\beta_0 r^{-\alpha}\ge 2E- 2\beta_0 q^{-\alpha}(2(E+\beta_0))^{\alpha\over 2},\label{4.103b}
\end{eqnarray}
for $r\ge q(2(E+\beta_0))^{-{1\over 2}}$ and $q\ge\beta$. From \eqref{4.103a} and \eqref{4.103b} we obtain 
\begin{equation}
rW'(r)<2E-2W(r),\label{4.103c}
\end{equation}
for $r\ge q(2(E+\beta_0))^{-{1\over 2}}$ and $q\ge\beta$.
Then consider the function $f\in C^2([\beta,+\infty)\times (R,+\infty),\R)$ defined by $f(q,r)=2E-2W(r)-{q^2\over 2r^2}$ for $(q,r)\in [\beta,+\infty)\times (R,+\infty)$ (${\pa f\over \pa r}(q,r)=-2W'(r)+2q^2/r^3$). We have $f(q,r_{\min,q})=0$ for $q\ge\beta$ and from the implicit function theorem and \eqref{4.102} it follows that $r_{\min,.}$ is a $C^2$ strictly increasing function from $[\beta,+\infty)$ to $(R,+\infty)$ so that $r_{\min,q}^3W'(r_{\min,q})<q^2$ for $q\ge\beta$ and the derivative of $r_{\min,.}$ is given by \eqref{4.7b}.
\end{proof}

\begin{proof}[Derivation of formulas \eqref{4.11c}]
We first make the change of variables $"r"=r_q(t)$ in \eqref{4.0} ($dr = \dot r_q(t) dt$ and $\dot r_q(t)=\sqrt{{2E-{q^2\over r_q(t)^2}-2V(r_q(t))}}$, see \eqref{2.5}) and we obtain 
\begin{equation}
g(q)=2\int_{r_{\min,q}}^{+\infty} { dr\over r^2\sqrt{2E-{q^2\over r^2}-2V(r)}},\ \textrm{ for }\ q>\beta.\label{4.9}
\end{equation}
Performing the change of variables ``$r^{-1}=s$'' in \eqref{4.9} we obtain
\begin{equation}
g(u^{-{1\over 2}})=2\int_0^{\chi(u)} { ds\over \sqrt{2E-{s^2\over u}-2V(s^{-1})}},\ \textrm{ for }\ u\in (0,\beta^{-2}).\label{4.200}
\end{equation}
Let $\sigma\in (0,\beta^{-2})$. From \eqref{4.200} and \eqref{4.11a} it follows that
\begin{equation}
H(\sigma)=\int_0^{\chi(\sigma)} \left(\int_{\phi(s)}^\sigma{du\over \sqrt{\sigma-u}\sqrt{2(E-V(s^{-1}))u-s^2}}\right)ds.\label{4.202}
\end{equation}
And performing the change of variables $u=\phi(s)+\ep(\sigma-\phi(s))$ in \eqref{4.202} ($du=(\sigma-\phi(s))d\ep$)
and using the equality \eqref{4.203}  we obtain
\begin{eqnarray}
&&\int_{\phi(s)}^\sigma{du\over \sqrt{\sigma-u}\sqrt{2(E-V(s^{-1}))u-s^2}}
=\phi(s)^{1\over 2}s^{-1}\int_{\phi(s)}^\sigma{du\over \sqrt{\sigma-u}\sqrt{u-\phi(s)}}\nonumber\\
&&=\phi(s)^{1\over 2}s^{-1}\int_0^1{d\ep\over \sqrt{\ep}\sqrt{1-\ep}}=\phi(s)^{1\over 2}s^{-1}\pi,\label{4.204}
\end{eqnarray}
for $s\in (0,\chi(\beta^{-2}))$ (we used the integral value $\pi=\int_0^1{d\ep\over \sqrt{\ep}\sqrt{1-\ep}}$).
Using \eqref{4.202}, \eqref{4.204} and \eqref{4.203} we obtain the first equality in \eqref{4.11c}
\begin{equation}
H(\sigma)=\pi \int_0^{\chi(\sigma)} \phi(s)^{1\over 2}s^{-1}ds=\pi \int_0^{\chi(\sigma)} (2(E-V(s^{-1})))^{-{1\over 2}}ds.\label{4.205}
\end{equation}
From \eqref{4.205} it follows that
\begin{equation}
{dH\over d\sigma}(\sigma)={\pi\over \sqrt{2(E-V(\chi(\sigma)^{-1}))}}{d\chi\over d\sigma}(\sigma).\label{4.206}
\end{equation}
Then combining \eqref{4.203b} and \eqref{4.206} we obtain the second equality in \eqref{4.11c}.
\end{proof}

We end this appendix by giving details on the equality $q'=q$ in section \ref{scatangle}. We keep the notations of section \ref{scatangle}.
We set $u_\theta= \big({x\over r}\big)^\bot$ and we have
\begin{equation}
r^2\dot\theta u_\theta=r \dot x-\dot r x=r (v_-+\dot y_-)-\dot r (x_-+tv_-+y_-).\label{2.6a}
\end{equation}
Using $x_-\cdot v_-=0$ and 
$y_-=o(1)$, $\dot y_-=o(t^{-1})$, as $t\to -\infty$, we have
\begin{eqnarray}
r(t)&=&|x_-+tv_-+y_-|=\big(t^2|v_-|^2+2t y_-\cdot v_-+|x_-+y_-|^2\big)^{1\over 2}\nonumber\\
&=&-t|v_-|+o(1),\ t\to -\infty,\label{2.6c}\\
\dot r(t)&=&{(v_-+\dot y_-)\cdot (x_-+tv_-+y_-)\over r(t)}={t|v_-|^2+o(1)\over -t|v_-|+o(1)}\nonumber\\
&=&- |v_-|+o(t^{-1}),\ t\to -\infty,\label{2.6d}
\end{eqnarray}
and we obtain
\begin{eqnarray}
r^2\dot\theta u_\theta&=&(-|v_-|t+o(1))(v_-+\dot y_-)-(-|v_-|+o(t^{-1}))(x_-+tv_-+y_-)\nonumber\\
&=& |v_-|x_-+o(1),\ t\to -\infty,\label{2.6e}
\end{eqnarray}
\begin{equation}
u_\theta=\Big({x(t)\over r(t)}\Big)^\bot=\Big({x_-+tv_-+o(1)\over - t|v_-|+o(1)}\Big)^\bot=-\widehat{v_-}^\bot+o(1),\ t\to -\infty\label{2.6f}
\end{equation}
where  $\hat w={w\over |w|}$ for $w\not=0$. Using \eqref{2.6e} and \eqref{2.6f} we obtain
\begin{equation}
r^2\dot\theta u_\theta\cdot \widehat{v_-}^\bot =x_-\cdot v_-^\bot+o(1),\ 
 u_\theta\cdot \widehat{v_-}^\bot=-1+o(1),\ t\to -\infty,\label{2.6i}
\end{equation}
which proves that 
$
q'=r^2\dot \theta=-x_-\cdot v_-^\bot.
$\hfill  $\Box$

\vskip 8mm

\noindent A. Jollivet

\noindent Laboratoire de Physique Th\'eorique et Mod\'elisation,

\noindent CNRS UMR 8089/Universit\'e de Cergy-Pontoise

\noindent 95302 Cergy-Pontoise, France

\noindent e-mail: alexandre.jollivet@u-cergy.fr

\end{document}